\def\iddots{\mathinner{\mkern1mu\raise\p@
\vbox{\kern7\p@\hbox{.}}\mkern2mu
\raise4\p@\hbox{.}\mkern2mu\raise7\p@\hbox{.}\mkern1mu}}
\newcommand{\field}[1]{\mathbb{#1}}
\renewcommand{\Pr}{\mathbb{P}} 
\newcommand{\cE}{\mathcal E}
\newcommand{\NN}{\field{N}}
\newcommand{\RR}{\field{R}}
\def\ep{^\varepsilon}
\def\e{\varepsilon}
\DeclareMathOperator{\rank}{rank}
\DeclareMathOperator{\nnrank}{rank_+}
\DeclareMathOperator{\rankpsd}{\rank_{psd}}
\DeclareMathOperator{\supp}{supp}
\DeclareMathOperator{\trace}{Tr}
\DeclareMathOperator{\di}{dim}
\newcommand{\Da}{\Delta_{\alpha}}
\newcommand{\ov}[1]{\overline{#1}}
\renewcommand{\S}{{\bf S}}
\renewcommand{\tilde}[1]{\widetilde{#1}}
\theoremstyle{plain}
\newtheorem{prop}{Proposition}
\newtheorem{fact}{Fact}[section]
\newtheorem{lem}{Lemma}[section]
\newtheorem{thm}{Theorem}[section]
\newtheorem*{nnthm}{Factorization Theorem}
\newtheorem{cor}{Corollary}
\theoremstyle{definition}
\theoremstyle{remark}
\title{
\vspace{-1.1cm}
Exploring the bounds on the positive semidefinite rank}
\author{Andrii Riazanov\thanks{Skolkovo Institute of Science and Technology; Moscow Institute of Physics and Technology
(State University).}\\  andrii.riazanov@gmail.com\and Mikhail Vyalyi\thanks{Dorodnicyn Computing
    Centre, FRC CSC RAS; Moscow Institute of Physics and Technology (State University);
    National Research University Higher School of Economics. The study
    has been funded by the Russian Academic Excellence Project
    '5-100'.} \\
vyalyi@gmail.com
    }
\date{}
\begin{document}
\maketitle
\vspace{-0.5cm}
\begin{abstract}
The nonnegative and positive semidefinite (PSD-) ranks are closely connected to the nonnegative and positive semidefinite extension complexities of a polytope, which are the minimal dimensions of linear and SDP programs which represent this polytope. Though some exponential lower bounds on the nonnegative~\cite{Fiorini_2012} and PSD-~\cite{Lee_2015} ranks has recently been proved for the slack matrices of some particular polytopes, there are still no tight bounds for these quantities. We explore some existing bounds on the PSD-rank and prove that they cannot give exponential lower bounds on the extension complexity. Our approach consists in proving that the existing bounds are upper bounded by the polynomials of the regular rank of the matrix, which is equal to the dimension of the polytope (up to an additive constant). As one of the implications, we also retrieve an upper bound on the mutual information of an arbitrary matrix of a joint distribution, based on its regular rank.

\end{abstract}


\section{Introduction}
Linear optimization plays an important role in computer science and mathematics. Though there exist efficient algorithms of linear optimization over convex sets, for the polytopes with exponential number of facets they still work too long in general case. That is why one may want to represent such ``hard'' convex set as a projection (linear map) of some ``easier'' convex set, for example of some affine slice of the cone of nonnegative orthant or the cone of positive semidefinite matrices, since on slices of both these cones linear optimization has efficient algorithms. Such representations are called \emph{the nonnegative} and \emph{the positive semidefinite (PSD-)} \emph{extensions}, respectively. 

Since many problems of combinatorial optimization can be represented as linear programs over a polytope, studying the extensions of convex polytopes is an important and challenging problem. The natural question is to find the minimal dimension for which there exists an extension of the given polytope. It can be also formulated as determining the smallest dimensions of LP or SDP programs which represent optimization over the given polytope, and such sizes are called \emph{the nonnegative} and \emph{the semidefinite extension complexities}, respectively. 
%
%

In the context of P $\not=$ NP we do not expect to find small nonnegative or PSD- extension complexities for NP-hard problems, since that would mean that there exist polynomial algorithms for solving these problems. However, there is still no general approach for proving the lower bounds on these quantities, and only a few exponential lower bounds for some particular problems has recently been proved. All such results use the connection between extension complexity and matrix factorizations, which was first discovered in~\cite{Yannakakis_1991} for the nonnegative extension complexity and nonnegative matrix factorizations. Further, this approach was extended in~\cite{Gouveia_2013} for the general case of cone factorizations, and the same result for PSD-factorizations was also obtained in~\cite{Fiorini_2012}. This instrument gave an opportunity to explore the nonnegative and PSD- extension complexities of polytopes via studying some characteristics of their slack matrices called \emph{the nonnegative} and \emph{the PSD-} ranks. For example, in the 1980s there were attempts to prove P = NP by providing the polynomial-sized linear program to solve the NP-hard travelling salesman problem (TSP). However, using the described approach, Yannakakis proved in~\cite{Yannakakis_1991} that any \emph{symmetric} LP which solves TSP has exponential size, which meant invalidity of all such attempts, since all the presented LPs were symmetric. The extension of this result for \emph{any} (not only symmetric) TSP was first presented in~\cite{Fiorini_2012}, where the authors used the connection between the nonnegative rank of the matrix and the nondeterministic communication complexity of its support. In this work, the exponential lower bounds on the nonnegative rank were also proved for CUT and Stable Set polytopes. The first analogical bounds for the PSD-extension complexity were presented in \cite{Lee_2015} using the sum-of-squares SDP hierarchy.

Since exponential lower bounds were obtained for some particular cases only, it is still a challenging problem to obtain reasonable estimations and bounds for the nonnegative and PSD- ranks. This problem is widely discussed during the last decade. 
For instance, exponential bounds on the nonnegative rank, and thus on the nonnegative extension complexity, were proved in~\cite{Rothvoss_2014} for the matching polytope , where the author used the extension of Razborov's result~\cite{Razborov}. We address the reader to the review~\cite{Fawzi_2015} for more details about recent research on the PSD-rank. 

There is also a problem of determining the computational complexity of computing the nonnegative and PSD- ranks. Both problems are known to be NP-hard, and recent research~\cite{Shitov_2016} shows that the problem of computing the PSD-rank is complete in $\exists\RR$ -- the 
existential theory of the reals.


\subsection*{Contribution}
In this paper we explore the lower bounds on the PSD-rank introduced in~\cite{Lee_2016}, which we will further address as \emph{bounding functionals (of a matrix)}. 
We show that these functionals cannot give exponential bounds on the PSD-rank, and thus on the positive semidefinite extension complexity. Our approach consists in proving that the bounding functionals of the slack matrix are bounded above 
by the polynomial of the regular rank of this matrix and the logarithm of the matrix size. 
Since for any polytope $P$ we have $\rank S_P = \di(P) + 1$, it would mean that the bounds are polynomial in the dimension of the polytope. 

As one of the implications of our approach, we achieve the upper bound on the mutual information for an arbitrary matrix of a joint distribution. More precisely, we show that the mutual information is bounded above by the logarithm of the rank of the matrix.

\subsection*{Outline of the paper}
This paper is organized as follows. In Sect. \ref{prel} we introduce all the necessary notations and explain some connections between the PSD-rank and the quantum communication complexity. In Sect. \ref{known} we present the bounding functionals from \cite{Lee_2016} and explain how the lower bound on the PSD-rank can be obtained via the mutual information. Finally, in Sect. \ref{upper} the upper bounds on the bounding functionals are proved. In particular, Theorem \ref{mutual_thm} shows that the mutual information of two discrete random variables is bounded above by the logarithm of the regular rank of the matrix of their joint distribution.

\section{Preliminaries}
\label{prel}

\subsection{Nonnegative and PSD- matrix factorizations}

\emph{The nonnegative matrix factorization} of the nonnegative matrix $A \in \RR^{m\times n}$ is the decomposition $A = BC$, where $B \in \RR^{m\times k}$, $C \in \RR^{k\times n}$, and $B, C$ are nonnegative matrices. Alternatively, such factorization can be thought of as two sets of vectors $\{b_i\}_{i=1}^m,\ \{c_j\}_{j=1}^n,\ b_i, c_j \in \RR^{k}_+$, such that $A(i, j) = \langle b_i, c_j \rangle$. Then \emph{the nonnegative rank} of $A$, denoted $\nnrank A$, is the smallest $k \in \NN$ for which such nonnegative factorization of $A$ exists.

Similarly, \emph{the positive semidefinite rank} $\rankpsd A$ is the minimal integer $r$ for which there exist two sets of complex Hermitian positive semidefinite matrices $\{B_i\}_{i=1}^m,\ \{C_j\}_{j=1}^n,\ B_i, C_j \in \S^{r}_+$, such that $A(i, j) = \langle B_i, C_j \rangle = \trace (B_iC_j)$. Such factorization is called \emph{the positive semidefinite factorization}, and it has many applications in combinatorial optimization and communication complexity. If to restrict the matrices in the factorization to be real symmetric positive semidefinite, one will obtain the definition of \emph{the real PSD-rank} $\rank^{\RR}_{psd}$. It can be shown (\cite{Lee_2016}), that the restriction for matrices to be real can increase $\rankpsd$ at most by the factor of 2, e.g. $\rankpsd \leq \rank^{\RR}_{psd} \leq 2\rankpsd$. Since in our context we only study asymptotic bounds on the ranks, there is no difference between considering $\rankpsd$ or $\rank^{\RR}_{psd}$.

We would like to emphasize that rescaling the nonnegative matrix by multiplying its rows or columns by any positive factors does not change its nonnegative and PSD- ranks. 
Indeed, multiplication of the $i^{th}$ row of $A$ by $\alpha$ corresponds to the multiplication of $b_i$ by the same factor $\alpha$ in the nonnegative factorization. Similarly, it corresponds to the multiplication of $B_i$ by $\alpha$ in the PSD-factorization. Obviously, the situation with the columns of $A$ is the same.
%
%


\subsection{Extension complexity}
\label{extension}

\emph{The nonnegative extension complexity} of the polytope $P$ is the smallest number $d$ such that $P$ can be expressed as a projection of an affine slice of the nonnegative $d$-dimensional orthant $\RR_+^d$. Similarly, \emph{the semidefinite (PSD-) extension complexity} of $P$ is the minimum number $r$ for which there exists an affine slice of the cone of complex Hermitian $r\times r$ positive semidefinite matrices $\S^r_+$ that projects onto $P$.

In other words, for optimizing over some polytope $P \in \RR^d$ one may want to represent is as $P = \pi(K~\cap~L)$, where $K \subseteq \RR^n$ is some close convex cone, $L$ is some affine subspace of $\RR^n$, and $\pi$ is a linear map (projection). Such representations are called \emph{$K$-lifts}, (\cite{Gouveia_2013}), or \emph{$K$-extensions}. If to choose $K$ from the families of the cones of nonnegative orthants $\RR^k_+$ or positive semidefinite matrices $\S^r_+$, the nonnegative and PSD- extension complexities for the given polytope correspond to minimal $k$ and $r$ for which such representations exist.


\subsection{Factorization theorem}
\label{factorization}

As it was discussed in Introduction, \cite{Yannakakis_1991}, \cite{Gouveia_2013}, and \cite{Fiorini_2012} proved that the extension complexities and matrix factorizations are interconnected. Here we present the Factorization theorem, which explains the relations between these two notions.


Let $P$ be a polytope in $\RR^d$ with $n$ vertices and $m$ facets, thus $P = \{ x \in \RR^d\ |\ \langle x, a_j \rangle \leq b_j,\ j \in \overline{1, m}\}$. Then \emph{the slack matrix of the polytope $P$} is defined as the nonnegative matrix $S_P \in \RR^{n\times m}$ with $S_P(i, j) = b_j - \langle v_i, a_j \rangle$, where $v_i$ is the $i^{th}$ vertex of $P$. Then the Factorization theorem can be formulated as follows:
\begin{nnthm}
 The nonnegative extension complexity of $P$ is equal to $\nnrank S_P$. Similarly, the PSD-extension complexity of $P$ is equal to $\rankpsd S_P$.
 \end{nnthm}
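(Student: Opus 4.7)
The plan is to prove each of the two equalities by a pair of matching inequalities: any factorization of $S_P$ of size $k$ yields an extension of $P$ of dimension $k$, and conversely any extension of dimension $k$ yields a factorization of $S_P$ of size $k$. I would handle the nonnegative case first; the PSD case proceeds in parallel by replacing $\RR^k_+$ with $\S^r_+$, inner products with trace inner products, and Farkas-type duality with SDP duality.

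For the easy direction (extension complexity $\leq \nnrank S_P$), suppose $S_P(i,j) = \langle b_i, c_j\rangle$ with $b_i, c_j \in \RR^k_+$. I would introduce the candidate extension
\[
Q = \{ (x, y) \in \RR^d \times \RR^k_+ : \langle c_j, y\rangle = b_j - \langle a_j, x\rangle \text{ for all } j = 1,\dots,m\},
\]
and verify that the coordinate projection $\pi(x,y) = x$ maps $Q$ onto $P$. The inclusion $\pi(Q) \subseteq P$ is immediate: the defining equalities combined with $y \geq 0$ and $c_j \geq 0$ force $\langle a_j, x\rangle \leq b_j$. For $P \subseteq \pi(Q)$, I would write a point $x \in P$ as a convex combination $x = \sum_i \lambda_i v_i$ of vertices and set $y := \sum_i \lambda_i b_i$; the factorization identity $S_P(i,j) = \langle b_i, c_j\rangle$ then checks the constraints. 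Since $Q$ is the intersection of $\RR^d \times \RR^k_+$ with an affine subspace, this exhibits a nonnegative extension of dimension $k$.

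For the harder direction, suppose $P = \pi(\RR^k_+ \cap L)$ for some affine subspace $L \subseteq \RR^k$ and linear map $\pi$. I would choose preimages $y_i \in \RR^k_+ \cap L$ of the vertices $v_i$, and then seek to factor each slack functional $\ell_j(x) := b_j - \langle a_j, x\rangle$ through the cone. The key observation is that $\ell_j \circ \pi$ is a linear functional on $\RR^k$ that is nonnegative on $\RR^k_+ \cap L$, so conic duality (Farkas' lemma for the nonnegative orthant, after homogenizing $L$) yields a vector $c_j \in \RR^k_+$ such that $\ell_j(\pi(y)) = \langle c_j, y\rangle$ for every $y \in L$. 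Evaluating at $y = y_i$ gives $S_P(i,j) = \langle c_j, y_i\rangle$, i.e.\ a nonnegative factorization of size $k$ with rows $y_i$ and columns $c_j$. I expect this duality step to be the main obstacle: one must ensure the Farkas certificate can be chosen genuinely nonnegative rather than merely a nonnegative limit, which is handled cleanly by passing to the homogeneous cone over $P \times \{1\}$ or by a Slater-type argument.

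The PSD case is completely parallel. From a factorization $S_P(i,j) = \trace(B_i C_j)$ with $B_i, C_j \in \S^r_+$ one builds the extension
\[
Q = \{(x, Y) \in \RR^d \times \S^r_+ : \trace(C_j Y) = b_j - \langle a_j, x\rangle, \ j = 1,\dots,m\},
\]
and from an extension $P = \pi(\S^r_+ \cap L)$ one extracts matrices $C_j \in \S^r_+$ via SDP duality applied to the nonnegative functional $\ell_j \circ \pi$ on $\S^r_+ \cap L$. Strong duality for the PSD cone can fail without Slater's condition, so the cleanest route is again to homogenize and work with the conic hull associated to the slack matrix, exactly as in the proofs of Gouveia--Parrilo--Thomas and Fiorini et al.
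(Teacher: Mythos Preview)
The paper does not actually prove this statement: it is stated as the ``Factorization Theorem'' in Section~\ref{factorization} with attribution to \cite{Yannakakis_1991}, \cite{Gouveia_2013}, and \cite{Fiorini_2012}, and no argument is given. So there is nothing in the paper to compare your sketch against.

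That said, your outline is exactly the standard proof from those references. A couple of minor points you would want to tighten if you wrote it out in full. First, in the ``factorization $\Rightarrow$ extension'' direction, your lift $Q$ lives in $\RR^d\times\RR^k_+$ rather than in $\RR^k_+$ itself; to match the paper's definition (projection of an affine slice of $\RR^k_+$) you should note that the $x$-coordinates can be recovered affinely from $y$ once $P$ is full-dimensional (or simply observe the standard equivalence between ``extended formulation with $k$ inequality constraints'' and ``affine slice of $\RR^k_+$''). Second, in the Farkas step for the reverse direction, the certificate a priori comes with an additive constant $\gamma_j\ge 0$, giving $S_P(i,j)=\gamma_j+\langle c_j,y_i\rangle$; you get $\gamma_j=0$ precisely because each inequality in the description of $P$ is facet-defining and hence tight at some vertex. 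Your remark about needing care with closure/strong duality in the PSD case is well placed and is exactly where the cited papers do the work.
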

%
%

This approach allows applying techniques for estimating or bounding such algebraic notions as sizes of matrix factorizations to answer geometrical questions about the complexities of the polytopes.

\subsection{Quantum communication complexity}
In this section, we describe the connection between the quantum communication complexity and $\rankpsd$. First, we will consider \emph{one-way quantum communication protocol}.

A quantum state $\rho$ is a positive semidefinite matrix with $\trace{\rho} = 1$. 
A measurement $\cE$ is the set of positive semidefinite matrices $\{E_i\}_{i \in \Omega}$, indexed by the finite set of nonnegative real numbers $\Omega$, with the condition $\Sigma_{i \in \Omega}E_i = I$. The measurements are also called POVM (``Positive Operator Value Measure'') in the literature. POVMs work in the following way: when we apply the measurement $\cE$ to the state $\rho$, the outcome is $i$ with probability $\trace(E_i\rho)$.

Then the process of communication is set as follows: initially, Alice has the integer $x$, and Bob has $y$. Then Alice sends an $r\times r$-dimensional quantum state $\rho_x$ to Bob, who measures it with POVM $\cE_y$ and outputs the result. We say that such a protocol computes the nonnegative matrix $M$ in expectation, if the expected value of Bob's output on the input $(x, y)$ is equal to $M(x,y)$ (the entry of the matrix $M$ in $x^{th}$ row and $y^{th}$ column). Then \emph{the quantum communication complexity} of the matrix $M$ is the logarithm of such a minimal size of dimension $r$, for which there exists a one-way quantum protocol which computes $M$ in expectation.

Fiorini et. al. \cite{Fiorini_2012} and Jain et. al. \cite{Jain_2013} proved that the minimal amount of quantum information needed for Alice and Bob to generate the nonnegative matrix $M$ is completely determined by the PSD-rank of this matrix. More precisely, they showed that the quantum communication complexity of $M$ is equal to $\lceil \log \rankpsd M \rceil$.

\section{Bounding functionals on the PSD-rank}
\label{known}
In this section, we present some existing general lower bounds on $\rankpsd$ from \cite{Lee_2016}, which we address as \emph{bounding functionals}. Except for the bound via mutual information, the bounding functionals are introduced here without justification. We address the reader to the original article for more details on the bounds. For convenience, we preserve the notations for the bounding functionals from the original article. 

\subsection{Bound via Mutual Information}
If $X$ and $Y$ are two random variables, then \emph{the mutual information} is defined as follows: 
\[I(A:B) = H(A) + H(B) - H(A, B) = H(A) - H(A | B) = H(B) - H(B|A),\]
where $H$ is Shannon entropy. The mutual information can be interpreted as the number of bits of information about $A$ that are revealed by the value of $B$. We will now use Holevo's theorem \cite{Watrous_11} to bound the mutual information. It claims that the number of classical bits of information that Alice can communicate to Bob by sending $n$ qubits does not exceed $n$. From the previous passage we know that we need exactly $\lceil\log \rankpsd M\rceil$ qubits of information to compute the matrix $M$. Normalizing $M$ and considering it as a matrix of joint distribution $\Pr(A, B)$, we then have:
\begin{fact}
\label{F2} Let $M$ be a matrix of a joint distribution of two discrete random variables $A, B$ with finite support, $M(a, b) = \Pr[B = b, A = a]$. Then    
\begin{flalign*}
&\hspace{5cm}  \rankpsd M \geq B_2(P) = 2^{I(A:B)}. &
\end{flalign*}
\end{fact}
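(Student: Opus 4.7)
The plan is to build an explicit one-way quantum protocol in which Alice transmits an $r = \rankpsd M$-dimensional state that Bob measures to produce a classical random variable $B'$ whose joint law with Alice's input $A$ equals $M$ itself. Once such a protocol is in hand, Holevo's theorem immediately yields $I(A:B') \leq \log r$, and since $(A,B')$ and $(A,B)$ will be identically distributed we obtain $I(A:B) \leq \log \rankpsd M$, i.e.\ $\rankpsd M \geq 2^{I(A:B)}$.

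First I would start from an optimal PSD-factorization $M(a,b) = \trace(B_a C_b)$ with $B_a, C_b \in \S^r_+$ and normalize it so that $\{C_b\}_b$ is a bona fide POVM. Setting $D := \sum_b C_b$, and assuming $D \succ 0$ (otherwise one restricts to the range of $D$, which does not increase the factorization size), I would replace
\[
B_a \;\mapsto\; D^{1/2} B_a D^{1/2}, \qquad C_b \;\mapsto\; D^{-1/2} C_b D^{-1/2}.
\]
The cyclicity of trace preserves every entry $\trace(B_a C_b) = M(a,b)$, while the new $C_b$ are PSD and sum to the identity. Because $M$ is a joint distribution, summing the entries over $b$ then gives $\Pr[A=a] = \sum_b M(a,b) = \trace(B_a \sum_b C_b) = \trace(B_a)$, so that $\rho_a := B_a / \Pr[A=a]$ is a valid $r$-dimensional density matrix whenever $\Pr[A=a] > 0$ (rows with $\Pr[A=a]=0$ may be discarded).

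Next I would define the protocol: draw $A$ from its marginal, let Alice prepare the state $\rho_A$ and send it to Bob, and let Bob apply the POVM $\{C_b\}_b$ and output the outcome $B'$. By construction,
\[
\Pr[A=a,\ B'=b] \;=\; \Pr[A=a]\cdot\trace(C_b \rho_a) \;=\; \trace(B_a C_b) \;=\; M(a,b),
\]
so $(A,B')$ and $(A,B)$ have identical joint distributions and hence $I(A:B) = I(A:B')$. Finally, applying Holevo's theorem to the ensemble $\{(\Pr[A=a],\rho_a)\}_a$ of $r$-dimensional states bounds the classical mutual information between Alice's input and any measurement outcome by $\log r$; combined with the previous identity this yields the claim.

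The main technical point is the normalization step: one has to verify that the similarity transform by $D^{\pm 1/2}$ preserves positive semidefiniteness of both sides of the factorization, and that the rank-reduction performed when $D$ is singular does not enlarge $\rankpsd M$. Once this is settled, the remainder of the proof is a direct combination of Holevo's inequality with the PSD-factorization, exactly in the spirit of the connection between $\rankpsd$ and quantum communication already recalled in the preliminaries.
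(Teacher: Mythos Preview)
Your argument is correct and follows essentially the same route the paper takes: turn an optimal PSD-factorization into a one-way quantum protocol and then invoke Holevo's theorem to cap the mutual information by $\log r$. The paper simply appeals to the Fiorini et~al./Jain et~al.\ characterization of one-way quantum communication in terms of $\rankpsd$ and then quotes Holevo, whereas you spell out the normalization $C_b \mapsto D^{-1/2}C_bD^{-1/2}$, $B_a \mapsto D^{1/2}B_aD^{1/2}$ and the resulting ensemble $\{(\Pr[A=a],\rho_a)\}$ explicitly; this is exactly the construction underlying the cited result, so the two proofs coincide once the black box is opened.
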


\subsection{Bounding functionals from \cite{Lee_2016}}

For two probability distributions $p = \{ p_i\}_{i=1}^n$ and $q = \{ q_i\}_{i=1}^n$ fidelity is defined as $F(p, q) = \Sigma_{i=1}^n \sqrt{p_iq_i}$.

Recall that \emph{the left stochastic matrix} is the matrix with nonnegative entries, with each column summing~to~$1$. Further in the text we will omit ``left'' and just use the term ``stochastic matrix'' instead. 

Then we have the following lower bounds:

\begin{fact}
\label{F3} Let $M \in \RR^{n\times m}$ be a stochastic matrix. Then
\begin{flalign*}
&\hspace{5cm} \rankpsd M \geq B_3(M) = \max_{\{q_i\}_{i=1}^m} \dfrac1{\sum_{i,j = 1}^{m}q_iq_jF(M_i, M_j)^2}&
\end{flalign*}
where the $\max$ is taken over all probability distributions $q = \{q_i\}_{i=1}^m$, and $M_i$ is the $i^{th}$ column of $M$.
\end{fact}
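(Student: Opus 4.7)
The plan is to interpret an optimal PSD factorization as a one-way quantum protocol and then combine a purity lower bound with the data processing inequality for fidelity.

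Let $r = \rankpsd M$ and fix a factorization $M(i,j) = \trace(A_i B_j)$ with $A_i, B_j \in \S^r_+$. First I would normalize so that $\{A_i\}_{i=1}^n$ is a POVM and the $B_j$ are quantum states. Setting $T = \sum_{i=1}^n A_i$ (invertible on its support, so we can restrict to $\range T$) and replacing $A_i \leftarrow T^{-1/2} A_i T^{-1/2}$, $B_j \leftarrow T^{1/2} B_j T^{1/2}$ preserves all the traces $\trace(A_i B_j) = M(i,j)$, forces $\sum_i A_i = I$, and, because $M$ is column-stochastic, forces $\trace(B_j) = \sum_i M(i,j) = 1$. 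So each $B_j$ is a density matrix supported on a space of dimension at most $r$, and measuring $B_j$ with the POVM $\{A_i\}$ produces the column distribution $M_j$ on $[n]$.

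For any distribution $q = \{q_j\}_{j=1}^m$, form the mixed state $\rho_q = \sum_j q_j B_j$, which has $\trace \rho_q = 1$ and $\rank \rho_q \leq r$. Cauchy--Schwarz applied to its eigenvalues gives the purity bound $\trace(\rho_q^2) \geq 1/r$. Expanding,
\[
\frac{1}{r} \;\leq\; \trace(\rho_q^2) \;=\; \sum_{i,j=1}^{m} q_i q_j \trace(B_i B_j).
\]
The remaining task is to control $\trace(B_i B_j)$ in terms of the classical fidelity $F(M_i, M_j)$. I would do this in two standard steps. First, for any PSD matrices $\rho,\sigma$ one has $\trace(\rho\sigma) \leq F_q(\rho,\sigma)^2$, where $F_q(\rho,\sigma)=\trace\sqrt{\sqrt{\rho}\sigma\sqrt{\rho}}$ is the quantum fidelity; indeed $\trace(\rho\sigma) = \|\sqrt{\sqrt{\rho}\sigma\sqrt{\rho}}\|_2^2$, while $F_q(\rho,\sigma) = \|\sqrt{\sqrt{\rho}\sigma\sqrt{\rho}}\|_1$, and $\|\cdot\|_2 \leq \|\cdot\|_1$ on every operator. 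Second, the data processing inequality for fidelity states that applying any POVM can only increase fidelity, so $F_q(B_i, B_j) \leq F(M_i, M_j)$ when we measure with $\{A_k\}$.

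Chaining these inequalities yields $\trace(B_i B_j) \leq F(M_i, M_j)^2$, and hence
\[
\frac{1}{r} \;\leq\; \sum_{i,j=1}^{m} q_i q_j F(M_i, M_j)^2.
\]
Rearranging and taking the maximum over $q$ gives $\rankpsd M \geq B_3(M)$. The step I expect to require the most care is the data processing inequality for fidelity: although it is classical, deriving it cleanly needs either Uhlmann's theorem (to write $F_q$ via purifications and then apply Cauchy--Schwarz to the overlaps $\sqrt{\trace(A_k B_i)}\cdot\sqrt{\trace(A_k B_j)}$) or an explicit operator inequality argument, and it is the only place in the proof where the quantum structure truly meets the combinatorial definition of $F$.
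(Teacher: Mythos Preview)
The paper does not actually prove this statement: Fact~\ref{F3} is quoted from \cite{Lee_2016} and the authors explicitly say that, except for the mutual-information bound, ``the bounding functionals are introduced here without justification.'' So there is no in-paper proof to compare against.

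That said, your argument is correct and is essentially the standard proof from the cited source. The normalization $A_i \mapsto T^{-1/2}A_iT^{-1/2}$, $B_j \mapsto T^{1/2}B_jT^{1/2}$ works by cyclicity of the trace; the purity bound $\trace(\rho_q^2)\geq 1/r$ is exactly Cauchy--Schwarz on the eigenvalues; the inequality $\trace(\rho\sigma)\leq F_q(\rho,\sigma)^2$ follows from $\|\cdot\|_2\leq\|\cdot\|_1$ applied to $\sqrt{\sqrt{\rho}\sigma\sqrt{\rho}}$; and the data-processing step $F_q(B_i,B_j)\leq F(M_i,M_j)$ is the monotonicity of fidelity under the measurement channel. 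One small caveat: in the normalization step you need $T$ to be invertible (or restrict to its support), which you flagged; this is fine because any $B_j$ with $\trace(TB_j)=1$ automatically lives in $\range T$. Otherwise the proof is complete.
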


\begin{fact}
\label{F4} Let $M \in \RR^{n\times m}$ be a stochastic matrix. Then
\begin{flalign*}
&\hspace{5cm}  \rankpsd M \geq B_4(M) = \sum_{i=1}^n\max_j M(i, j).&
\end{flalign*}
\end{fact}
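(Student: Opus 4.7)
The plan is to turn a positive semidefinite factorization of $M$ into the standard ``measurement on quantum states'' form and then bound each row maximum by the operator norm of the corresponding measurement operator. I would fix a factorization $M(i,j) = \trace(B_i C_j)$ with $B_i, C_j \in \S^r_+$ and $r = \rankpsd M$. Since $M$ is column-stochastic, $\trace\bigl((\sum_i B_i)\, C_j\bigr) = 1$ for every $j$. Setting $S = (\sum_i B_i)^{1/2}$ and passing to $\tilde B_i = S^{-1} B_i S^{-1}$, $\tilde C_j = S C_j S$, I would obtain an equivalent factorization in which $\{\tilde B_i\}$ forms a POVM ($\sum_i \tilde B_i = I$) and each $\tilde C_j$ is a density matrix ($\trace \tilde C_j = 1$). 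This is the standard normalization trick already alluded to in the quantum-communication subsection of the preliminaries.

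Next I would exploit the inequality $\trace(A \rho) \leq \|A\|_{\mathrm{op}}$, valid for any PSD matrix $A$ and any density matrix $\rho$ (it follows from $\trace(A\rho) = \trace(\rho^{1/2} A \rho^{1/2}) \leq \|A\|_{\mathrm{op}}\, \trace(\rho)$). Applied row by row this gives $\max_j M(i,j) = \max_j \trace(\tilde B_i \tilde C_j) \leq \|\tilde B_i\|_{\mathrm{op}}$, and since the operator norm of a PSD matrix is bounded by its trace, $\|\tilde B_i\|_{\mathrm{op}} \leq \trace(\tilde B_i)$. Summing over $i$ and using the POVM condition yields
\[
B_4(M) = \sum_{i=1}^n \max_j M(i,j) \leq \sum_{i=1}^n \trace(\tilde B_i) = \trace(I) = r = \rankpsd M,
\]
which is the claimed bound.

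The main technical obstacle is the normalization step, which tacitly assumed that $S = (\sum_i B_i)^{1/2}$ is invertible. To handle the general case I would restrict the whole factorization to the range of $\sum_i B_i$: on its orthogonal complement every $B_i$ vanishes, so the inner products $\trace(B_i C_j)$ are unchanged if each $C_j$ is replaced by its compression to that range. The compressed factorization has dimension at most $r$ and the new sum $\sum_i B_i$ is strictly positive on the reduced space, after which the preceding argument goes through verbatim. A minor secondary point is that the statement as given is about complex Hermitian $\S^r_+$ factorizations, but every step (polar-type normalization, the trace/operator-norm inequalities) holds for Hermitian PSD matrices as well, so no extra work is needed there.
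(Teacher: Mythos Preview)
Your argument is correct. The normalization $\tilde B_i = S^{-1}B_iS^{-1}$, $\tilde C_j = S C_j S$ with $S=(\sum_i B_i)^{1/2}$ indeed yields a POVM against density matrices, the chain $\max_j\trace(\tilde B_i\tilde C_j)\le\|\tilde B_i\|_{\mathrm{op}}\le\trace(\tilde B_i)$ is valid for PSD operators, and the restriction to $\range(\sum_i B_i)$ handles the degenerate case cleanly.

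As for the comparison: the present paper does not supply its own proof of this fact. In Section~\ref{known} it explicitly states that, aside from the mutual-information bound, the bounding functionals ``are introduced here without justification'' and refers the reader to~\cite{Lee_2016}. Your proof is essentially the argument given in that reference, so there is nothing different to contrast against.
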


\begin{fact}
\label{F5} Let $M \in \RR^{n\times m}$ be a stochastic matrix. Then
\begin{flalign*}
&\hspace{5cm} \rankpsd M \geq B_5(M) = \sum_{i=1}^n \max_{\{q^{(i)}_j\}_{j=1}^m} \dfrac{\sum_{k=1}^mq_k^{(i)}M(i, k)  }{\sqrt{\sum_{s,t = 1}^{m}q^{(i)}_sq^{(i)}_tF(M_s, M_t)^2}}&
\end{flalign*}
where the $\max$ is taken over all probability distributions $q^{(i)} = \{q^{(i)}_j\}_{j=1}^m$, and $M_i$ is the $i^{th}$ column of $M$.
\end{fact}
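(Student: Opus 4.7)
The natural plan is to mimic the proof of Fact~\ref{F4} with a refined per-row estimate. I would start from an optimal PSD-factorization $M(i,j)=\trace(B_iC_j)$ with $r=\rankpsd M$. Because $M$ is stochastic, the first step is to normalize to a ``POVM + states'' form: letting $P=\sum_i B_i$ (invertible on its support, to which we may restrict), the conjugation $B_i\mapsto P^{-1/2}B_iP^{-1/2}$ and $C_j\mapsto P^{1/2}C_jP^{1/2}$ preserves every product $\trace(B_iC_j)$, yields $\sum_i B_i=I_r$, and by column stochasticity of $M$ forces $\trace(C_j)=1$. Writing $B_i=\trace(B_i)\,\tilde B_i$ with $\tilde B_i:=B_i/\trace(B_i)$ a state, we have $\sum_i\trace(B_i)=r$, so it suffices to prove the per-row bound
\[
\frac{\sum_k q^{(i)}_k M(i,k)}{\sqrt{\sum_{s,t}q^{(i)}_sq^{(i)}_t F(M_s,M_t)^2}}\ \leq\ \trace(B_i).
\]
Setting $\sigma_i:=\sum_k q^{(i)}_k C_k$ (a state) and using $\sum_k q^{(i)}_k M(i,k)=\trace(B_i\sigma_i)=\trace(B_i)\trace(\tilde B_i\sigma_i)$, this reduces after squaring to
\[
\trace(\tilde B_i\sigma_i)^2\ \leq\ \sum_{s,t}q^{(i)}_sq^{(i)}_t F(M_s,M_t)^2.
\]

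The right-hand side would be handled by a chain of three standard estimates. (i) Hilbert--Schmidt Cauchy--Schwarz combined with $\trace(\tilde B_i^2)\leq 1$ (purity of any state) gives $\trace(\tilde B_i\sigma_i)^2\leq\trace(\sigma_i^2)$. (ii) Expanding $\trace(\sigma_i^2)=\sum_{s,t}q^{(i)}_sq^{(i)}_t\trace(C_sC_t)$, the inequality $\trace(\rho\tau)\leq F(\rho,\tau)^2$ for states (an immediate consequence of $\|\sqrt{\rho}\sqrt{\tau}\|_1\geq\|\sqrt{\rho}\sqrt{\tau}\|_2$ on singular values) upgrades this to $\trace(\sigma_i^2)\leq\sum_{s,t}q^{(i)}_sq^{(i)}_t F(C_s,C_t)^2$. (iii) The data-processing inequality for fidelity applied to the POVM $\{B_i\}$ yields $F(C_s,C_t)\leq F(M_s,M_t)$, since measuring $C_s$ and $C_t$ with $\{B_i\}$ produces precisely the column distributions $M_s$ and $M_t$. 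Chaining (i)--(iii) closes the per-row bound; summing over $i$ and maximizing independently over each $q^{(i)}$ then gives $B_5(M)\leq\sum_i\trace(B_i)=r$.

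The main obstacle I expect is organizational rather than analytic. First, the normalization step must be spelled out carefully: invertibility of $P$ on its support, and the way column stochasticity of $M$ propagates through the conjugation to force $\trace(C_j)=1$ (this uses both the rescaling invariance of $\rankpsd$ recorded earlier and the specific assumption that columns of $M$ sum to $1$). Second, one must track the direction of the data-processing inequality, since the argument requires fidelity to \emph{increase} under measurement, i.e.\ $F(M_s,M_t)\geq F(C_s,C_t)$, rather than the other way around. Beyond these two points, the proof is a mechanical chaining of three well-known quantum-information inequalities, and its structure exactly parallels the one-line identity $\max_j M(i,j)\leq\trace(B_i)$ that underlies Fact~\ref{F4}.
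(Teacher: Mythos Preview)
The paper does not actually prove this statement: Section~\ref{known} explicitly says that, apart from the mutual-information bound, the functionals from \cite{Lee_2016} ``are introduced here without justification'' and refers the reader to the original article. So there is no in-paper proof to compare against; Fact~\ref{F5} is quoted, not proved.

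That said, your sketch is a correct and complete argument, and it is essentially the proof given in \cite{Lee_2016}. The normalization to a POVM/state factorization via conjugation by $P^{-1/2}$ is standard, and your chain (i)--(iii) --- Hilbert--Schmidt Cauchy--Schwarz with purity $\trace(\tilde B_i^2)\le 1$, then $\trace(\rho\tau)=\|\sqrt{\rho}\sqrt{\tau}\|_2^2\le\|\sqrt{\rho}\sqrt{\tau}\|_1^2=F(\rho,\tau)^2$, then monotonicity of fidelity under the channel ``measure with $\{B_i\}$'' --- is exactly the intended mechanism. You also have the direction of data processing right: fidelity can only go up under a quantum-to-classical channel, so $F(C_s,C_t)\le F(M_s,M_t)$. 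The only cosmetic point worth tightening in a write-up is the treatment of rows with $\trace(B_i)=0$ (equivalently, zero rows of $M$), where $\tilde B_i$ is undefined but the corresponding summand in $B_5$ vanishes anyway.
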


\section{Upper bounds on the bounding functionals}
\label{upper}
All the bounds from section \ref{known} were explored and compared in \cite{Lee_2016}. It turned out that in different cases $B_2, B_3, B_4$, or $B_5$ can give better bounds on $\rankpsd$ than others, and some of them can be tight in some particular cases. However, the key question of whether these functions can give exponential lower bounds on the PSD-rank with respect to the regular rank was not addressed. In this section we answer this question negatively.
%
%

In the context of combinatorial optimization, we would like to show that for the polytope of some \mbox{NP-hard} problem the semidefinite extension complexity is exponential in the dimension. Following the arguments from Section~\ref{extension}, it suffices to show that the PSD-rank of the corresponding slack matrix is exponential. It is easy to show (\cite{Gouveia_2013_2}) that the regular rank of the slack matrix equals to the dimension of the polytope plus one: $\rank S_P = \di P + 1$.
For all the presented bounding functionals we provide the upper bounds polynomial in the regular rank of the matrix and the logarithm of the matrix size, which means that they cannot be exponential in the dimension.

\subsection{Row elimination transformation}
\label{elimination}
We will now describe the row elimination transformation, which will be used for proving the required bounds.
%
%

Let $M \in \RR^{n\times m}$ be a nonnegative matrix with $\rank M = r < n$. Without loss of generality, assume that first $r+1$ rows $\ov{m_1}, \ov{m_2} \dots, \ov{m_{r+1}}$ are non-zero. 
They are linearly dependent, so there exists a nontrivial set of real numbers $\{\alpha_i\}_{i=1}^{r+1}$, such that $\sum_{i=1}^{r+1}\alpha_i\ov{m_i} = \ov{0}$. Since all entries of $M$ are nonnegative, there are both negative and positive numbers among $\{\alpha_i\}_{i=1}^{r+1}$. For such a set of real numbers $\{\alpha_i\}$ we denote by $\Delta_{\alpha}$ the closed interval $\Delta_{\alpha} = \left[ -\dfrac1{\max_i \alpha_i},\ -\dfrac1{\min_i \alpha_i} \right]$, which is properly defined due to the last remark.

Then we define the matrix $M\ep$ as follows: for $1 \leq i \leq (r+1)$ the $i$-th row of $M\ep$ equals $\ov{m_i}(1 + \e\alpha_i)$, for~$i > (r+1)$ the $i$-th row of $M\ep$ coincides with the $i$-th row of $M$. We call the matrix $M\ep$ \emph{$\e$-transformation} of $M$.

First of all, note that $\quad (1 + \e\alpha_i)\ \geq 0\quad\forall i \in \ov{1, (r+1)} \quad \Leftrightarrow \quad \e \in \Delta_{\alpha}$. Moreover, it holds that when $\e$ is equal to one of the ends of $\Delta_{\alpha}$, at least one of the coefficients $(1 + \e\alpha_i)$ is equal to zero. It means that for $\e \in \Delta_{\alpha}$ the matrix $M\ep$ is nonnegative matrix, and when $\e$ is either the left or the right end of $\Da$, $M\ep$ has more zero rows than $M$.

Next, we prove that sums of columns do not change after row elimination transformation. Indeed, 
\[ \sum_{i=1}^n m\ep_{ij} = \sum_{i=1}^{r+1} m\ep_{ij} + \sum_{i=r+2}^n m_{ij} = \sum_{i=1}^{r+1} m_{ij}(1 + \e\alpha_i) + \sum_{i=r+2}^n m_{ij} = \sum_{i=1}^{n} m_{ij} + \e\underbrace{\sum_{i=1}^{r+1}\alpha_im_{ij}}_{0} = \sum_{i=1}^{n} m_{ij}. \]
In particular, it means that if $M$ is stochastic, then for $\e \in \Da\ M\ep$ is also stochastic. Similarly, if $M$ is a matrix of a joint distribution, then $M\ep$ is also a matrix of some joint distribution for $\e$ from $\Da$.

\subsection{Upper bound on $B_2$ (Mutual Information)}

Let $M\in \mathbb{R}^{n\times m}$ be the matrix of a joint distribution of two discrete random variables $X, Y:$ \[m_{ij} = \mathbb{P}\left[X = x_i, Y = y_j\right] \geq 0, \qquad \sum_{i=1, j=1}^{n, m} m_{ij} = 1.\]
Let $p_i,\ i \in \overline{1, n}$, and $q_j,\ j \in \overline{1, m}$, be the marginal probabilities of $X$ and $Y$ respectively:
\[  p_i = \mathbb{P}\left[X = x_i\right] = \sum_{j=1}^m m_{ij},\ i \in \overline{1, n};   \qquad q_j = \mathbb{P}\left[Y = y_j\right] = \sum_{i=1}^n m_{ij},\ j \in \overline{1, m}.    \]
Then the mutual information between $X$ and $Y$ can also be defined as: 
\[I(X : Y) = D_{KL} \left( p(X, Y)\, ||\, p(X)p(Y)\right) =
\sum_{i=1}^n\sum_{j=1}^m p(x_i, y_j) \log_2\left( \dfrac{p(x_i, y_j)}{p(x_i)p(y_j)}\right) =\sum_{i=1}^n\sum_{j=1}^m m_{ij} \log_2\left( \dfrac{m_{ij}}{p_iq_j}\right),\] 
where we set $0\log\dfrac{0}{q} = 0$ (the logarithm here and further is to the base 2). We also denote $I(M) = I(X:Y)$.

\begin{thm}
\label{mutual_thm}
Let $M \in \RR^{n\times m}$ be the matrix of a joint distribution of $X$ and $Y$. Then
 \[B_2(M) = 2^{I(X:Y)} \leq \rank M .\]
\end{thm}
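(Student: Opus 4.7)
The plan is to apply the row elimination transformation from Section~\ref{elimination} iteratively: each application will preserve the joint-distribution structure, weakly increase the mutual information, and strictly reduce the number of non-zero rows. Once the matrix has at most $r := \rank M$ non-zero rows, the marginal $X$ is supported on at most $r$ values, so $H(X)\leq \log_2 r$ and therefore $I(X{:}Y)\leq H(X)\leq \log_2 r$, which rearranges to $2^{I(X:Y)}\leq \rank M$.

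The key computation is to show that $\varepsilon \mapsto I(M\ep)$ is an \emph{affine} function of $\varepsilon$ on $\Da$. For $i\leq r+1$ the $i$-th row of $M\ep$ equals $(1+\e\alpha_i)\ov{m_i}$, so the row-marginal scales as $p_i\ep = p_i(1+\e\alpha_i)$, while by the column-sum invariance already proved in Section~\ref{elimination} we have $q_j\ep = q_j$ for every $j$. The crucial cancellation is
\[
\log\frac{m_{ij}\ep}{p_i\ep\, q_j\ep} \;=\; \log\frac{m_{ij}(1+\e\alpha_i)}{p_i(1+\e\alpha_i)\, q_j} \;=\; \log\frac{m_{ij}}{p_i\, q_j},
\]
so the log-ratios are $\varepsilon$-independent, and the only $\varepsilon$-dependence of $I(M\ep) = \sum_{i,j} m_{ij}\ep\log(m_{ij}\ep/(p_i\ep q_j\ep))$ sits in the affine prefactor $m_{ij}\ep = m_{ij}(1+\e\alpha_i)$; rows with $i > r+1$ contribute a constant. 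Hence $I(M\ep)$ is affine in $\varepsilon$.

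Because $I(M\ep)$ is affine on $\Da$ and $0\in \Da$, at least one endpoint $\e^\ast$ of $\Da$ satisfies $I(M^{\e^\ast})\geq I(M^0)=I(M)$. At this endpoint some factor $(1+\e^\ast\alpha_i)$ vanishes, so $M^{\e^\ast}$ has strictly more zero rows than $M$; moreover $\rank M^{\e^\ast}\leq \rank M$, because each of the first $r+1$ rows of $M^{\e^\ast}$ is a scalar multiple of the corresponding row of $M$, so its row-space is contained in that of $M$. By Section~\ref{elimination}, $M^{\e^\ast}$ is itself the matrix of a joint distribution. Iterating this step at most $n-r$ times produces a joint-distribution matrix $\tilde M$ with at most $r$ non-zero rows and $I(\tilde M)\geq I(M)$; for such $\tilde M$ the marginal $\tilde X$ takes at most $r$ values, hence $I(\tilde M)\leq H(\tilde X)\leq \log_2 r$, and the desired bound follows.

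The only non-routine step I anticipate is the affine identity for $I(M\ep)$, which hinges on the clean cancellation of $(1+\e\alpha_i)$ inside the logarithm displayed above, together with the convention $0\log(0/q)=0$ that keeps everything well-defined at the endpoints of $\Da$ where some rows become zero. Everything else—column-sum invariance, rank monotonicity under row-scaling, and termination of the iteration—follows directly from the row elimination framework already developed.
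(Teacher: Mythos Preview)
Your proposal is correct and follows essentially the same approach as the paper: the same row-elimination iteration, the same cancellation of $(1+\e\alpha_i)$ inside the logarithm yielding affinity of $I(M\ep)$ in $\e$, and the same final bound $I(\tilde M)\leq H(\tilde X)\leq \log_2 r$. Your explicit remark that $\rank M^{\e^\ast}\leq \rank M$ (ensuring the iteration can continue) is a detail the paper leaves implicit, but otherwise the arguments coincide.
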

\begin{proof}
Denote $r = \rank M$. 
We will now transform the original matrix $M$ in such a way, that the mutual information will not decrease, but the new matrix $\tilde{M}$ will have at most $r$ non-zero rows. 

Suppose $M$ has more than $r$ non-zero rows. Then we apply the row elimination transformtaion and consider the $\e$-transformation $M\ep$ of the original matrix. Since we have already shown that it is also a matrix of some joint distribution, we explore how the mutual information changes after such transformations.
%
%

First, since the $\e$-transformation does not change the sums in the columns of $M$, we have $q\ep_j = q_j$. Then, since $p\ep_i$ is the sum of entries in the $i$-th row, we obtain $p\ep_i = p_i(1+\e\alpha_i)$.

 Note that since $M$ and $M\ep$ coincide on rows with indexes larger than $r+1$, we may omit the summation over these rows:
\[ I(M\ep) - I(M) =  \sum_{i=1}^{r+1}\sum_{j=1}^s\left[ m\ep_{ij} \log\left( \dfrac{m\ep_{ij}}{p\ep_iq\ep_j}\right) - m_{ij}\log\left( \dfrac{m_{ij}}{p_iq_j}\right)  \right] = \]

\[ = \sum_{i=1}^{r+1}\sum_{j=1}^s\left[ m_{ij}(1 + \e\alpha_i) \log\left( \dfrac{m_{ij}\cancel{(1 + \e\alpha_i)}}{p_i\cancel{(1 + \e\alpha_i)}q_j}\right) - m_{ij}\log\left( \dfrac{m_{ij}}{p_iq_j}\right)  \right] = \]

\[ = \sum_{i=1}^{r+1}\sum_{j=1}^s\left[ \e \alpha_im_{ij}\log\left( \dfrac{m_{ij}}{p_iq_j}\right)  \right] = \e\cdot\Lambda.   \]


Now recall that the $\e$-transformation is valid for $\e \in \Da$, where the left end of $\Da$ is negative, and the right end is positive. It means that 
we can choose an end of the interval of $\Da$ such that $I(M\ep) \geq I(M)$. It only remains to note that with the chosen value of $\e$ at least one of the first $(r+1)$ rows in $M\ep$ becomes zero.

To get an upper bound on the mutual information, we apply $\e$-transformations with such  suitable $\e$'s that the number of non-zero rows strictly decreases and the mutual information does not decrease. At the end of such procedure we obtain the matrix $\tilde{M}$ with at most $r$ non-zero rows for which $I(M) \leq I(\tilde{M})$.
Since $\tilde{M}$ is the matrix of joint distribution, we have $I(\tilde{M}) = I(\tilde{X} : \tilde{Y})$, where the support of $\tilde{X}$ has cardinality at most $r$. Using the equality $I(\tilde{X} : \tilde{Y}) = H(\tilde{X}) - H(\tilde{X} | \tilde{Y})$ and the non-negativity of the conditional entropy, we finally have:
\[ I(M) \leq I(\tilde{M}) = I(\tilde{X} : \tilde{Y}) \leq  H(\tilde{X}) \leq \log |\supp (\tilde{X})| \leq \log r .\]


\end{proof}

\subsection{Upper bound on $B_3$}
\label{sectB3}


We will show that $B_3(M)$ is upper bounded by $poly(\mathrm{rank}(M), \ln m)$: 
\begin{thm}
\label{TB3}
Let $M \in \RR^{n\times m}$ be a stochastic matrix, $\rank M = r$. Then
\[  B_3(M) \leq (\ln m +1)^2r^2. \]
\end{thm}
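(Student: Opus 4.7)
The plan is to introduce a ``linearized'' auxiliary functional
\[
B_3'(M) := \max_q \frac{1}{\sum_{i,j=1}^m q_i q_j F(M_i, M_j)},
\]
reduce bounds on $B_3(M)$ to bounds on $B_3'(M)$ via Cauchy--Schwarz, and then attack $B_3'(M)$ with the row elimination transformation of Section~\ref{elimination}. The crucial observation is that $\sum q_iq_j F(M_i,M_j)^2$ depends \emph{quadratically} on the parameter $\e$ of the row elimination (so its minimum over $\Da$ need not be attained at an endpoint, and a direct row elimination on $B_3$ fails), whereas $\sum q_iq_j F(M_i,M_j)$ depends on $\e$ \emph{linearly}. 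This linearity is precisely what will make the row elimination argument go through for $B_3'$.

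First I would apply Cauchy--Schwarz to the vectors $\bigl((q_iq_j)^{1/2}\bigr)_{i,j}$ and $\bigl((q_iq_j)^{1/2}F(M_i,M_j)\bigr)_{i,j}$ to obtain $(\sum q_iq_j F)^2 \leq \sum q_iq_j \cdot \sum q_iq_j F^2 = \sum q_iq_j F^2$, so that $B_3(M) \leq B_3'(M)^2$; it therefore suffices to show $B_3'(M) \leq r$. Rearranging,
\[
\sum_{i,j} q_iq_j F(M_i, M_j) \;=\; \sum_k \Bigl(\sum_i q_i \sqrt{M(k,i)}\Bigr)^{\!2} \;=:\; \sum_k R_k^2,
\]
and since $\sqrt{M\ep(k,i)} = \sqrt{M(k,i)}\sqrt{1+\e\alpha_k}$, one has $(R_k\ep)^2 = (1+\e\alpha_k) R_k^2$, so
\[
\sum_{i,j} q_iq_j F(M\ep_i, M\ep_j) \;=\; \sum_k R_k^2 \;+\; \e \sum_k \alpha_k R_k^2,
\]
which is \emph{linear} in $\e$. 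Applied to the distribution $q^*$ minimizing $\sum q_iq_j F(M_i,M_j)$, linearity guarantees that at one of the two endpoints of $\Da$ the value is no greater than at $\e=0$; choosing that endpoint preserves $B_3'$ (i.e.\ $B_3'(M\ep) \geq B_3'(M)$) while zeroing at least one more row of $M\ep$ and keeping $\rank M\ep \leq \rank M \leq r$. Iterating this reduction, I would arrive at a stochastic matrix $\tilde M$ with at most $r$ non-zero rows satisfying $B_3'(\tilde M) \geq B_3'(M)$.

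Finally, I would bound $B_3'(\tilde M)$ directly. Let $V$ be the $r' \times m$ matrix ($r' \leq r$) of entrywise square roots of the non-zero rows of $\tilde M$. Stochasticity gives $\|V_i\|_2 = 1$, and nonnegativity of entries gives $\|V_i\|_1 \geq \|V_i\|_2 = 1$. Hence for any probability distribution $q$, $\|Vq\|_1 = \sum_i q_i \|V_i\|_1 \geq 1$, and since $Vq \in \RR^{r'}$, Cauchy--Schwarz yields $\|Vq\|_2^2 \geq \|Vq\|_1^2/r' \geq 1/r$. Noting $\sum q_iq_j F(\tilde M_i,\tilde M_j) = \|Vq\|_2^2$, this shows $B_3'(\tilde M) \leq r$, and combining everything:
\[
B_3(M) \;\leq\; B_3'(M)^2 \;\leq\; B_3'(\tilde M)^2 \;\leq\; r^2 \;\leq\; r^2(\ln m+1)^2.
\]
The main obstacle in this plan is the quadratic (rather than linear) dependence of the natural $B_3$ sum on $\e$, which blocks a direct row elimination argument at the $B_3$ level; the Cauchy--Schwarz linearization to the auxiliary $B_3'$ is the key device that sidesteps this obstacle, after which the row elimination machinery of Section~\ref{elimination} and a simple $\ell_1$--$\ell_2$ norm comparison in $\RR^{r'}$ finish the job.
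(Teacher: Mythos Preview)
Your argument is correct, and in fact it yields the stronger bound $B_3(M)\le r^2$, so the extra $(\ln m+1)^2$ factor is unnecessary. Every step checks out: the Cauchy--Schwarz reduction $B_3\le (B_3')^2$ is valid since $\sum_{i,j}q_iq_j=1$; the identity $\sum_{i,j}q_iq_jF(M_i,M_j)=\sum_k\bigl(\sum_i q_i\sqrt{M(k,i)}\bigr)^2=\|Vq\|_2^2$ makes the dependence on $\e$ genuinely linear; and the final $\ell_1$--$\ell_2$ estimate $\|Vq\|_2^2\ge\|Vq\|_1^2/r'\ge 1/r$ is clean (nonnegativity is what gives the equality $\|Vq\|_1=\sum_i q_i\|V_i\|_1$, not the inequality $\|V_i\|_1\ge\|V_i\|_2$, but the conclusion stands).

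The paper takes a rather different route. It first passes from fidelity to statistical distance via $F(p,q)\ge 1-\tfrac12|p-q|$, then restricts attention to the $s$ columns with largest weight using a harmonic-series pigeonhole lemma (this is where the $(\ln m+1)$ factor enters), applies the row elimination to the \emph{average pairwise statistical distance} $S(M)$, and finally proves a separate combinatorial lemma that $S(M)\le 1-1/r$ for any $r\times m$ stochastic matrix by explicitly identifying the extremal configuration. Your approach avoids both the detour through $|M_i-M_j|$ and the weight-restriction trick: by linearizing $F^2$ to $F$ you keep the full distribution $q$ throughout, and the row elimination plus the $\ell_1/\ell_2$ comparison in $\RR^{r'}$ replaces the explicit extremal analysis of Lemma~\ref{main}. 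The payoff is a shorter proof and a bound that is tight up to constants (identity matrices show $B_3$ can be of order $r$), whereas the paper's bound carries the spurious logarithmic factor.
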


We start with proving the following well-known fact:

\begin{lem}
\label{fp}
For distributions $p, q$ it holds $F(p, q) \geq 1 - \dfrac{|p - q|}{2}$, where $|p-q|$ is $l_1-$norm of the vector $(p-q)$, and thus $\dfrac{|p - q|}{2}$ is the statistical distance between the distributions.
\begin{proof}
\[1 - \sum_{k=1}^m\sqrt{p_kq_k} = \dfrac1{2}\left( \sum p_k + \sum q_k - 2 \sum\sqrt{p_kq_k} \right) =  
 \dfrac1{2}\sum\left| \sqrt{p_k} - \sqrt{q_k} \right|^2 \leq    \dfrac1{2}\sum\left| p_k - q_k \right|     \]
\[ \Rightarrow F(p, q) = \sum_{k=1}^m\sqrt{p_kq_k} \geq 1 - \dfrac1{2}\sum\left| p_k - q_k \right|  = 1 - \dfrac{|p - q|}{2}.\]
\end{proof}
\end{lem}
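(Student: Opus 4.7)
The plan is to express $1-F(p,q)$ as a sum of nonnegative squared terms and then dominate each term pointwise by the corresponding contribution to $|p-q|/2$. The identity that makes this clean is the Hellinger-type equality $1 - F(p,q) = \tfrac{1}{2}\sum_k (\sqrt{p_k} - \sqrt{q_k})^2$.

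First I would use the normalization $\sum_k p_k = \sum_k q_k = 1$ to rewrite
\[
1 - F(p,q) \;=\; \tfrac{1}{2}\Bigl(\textstyle\sum_k p_k + \sum_k q_k\Bigr) - \sum_k \sqrt{p_k q_k} \;=\; \tfrac{1}{2}\sum_k \bigl(\sqrt{p_k} - \sqrt{q_k}\bigr)^2,
\]
which is the exact step taken in the displayed calculation; nothing deeper than writing $1$ symmetrically and completing the square is needed here.

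Next I would establish the pointwise estimate $(\sqrt{p_k} - \sqrt{q_k})^2 \le |p_k - q_k|$. The key identity is the factorization $|p_k - q_k| = |\sqrt{p_k} - \sqrt{q_k}|\cdot(\sqrt{p_k} + \sqrt{q_k})$, and the bound then follows from $|\sqrt{p_k} - \sqrt{q_k}| \le \sqrt{p_k} + \sqrt{q_k}$, which is immediate since $p_k, q_k \ge 0$. (Equivalently, one may note that $(\sqrt{a}-\sqrt{b})^2 = a + b - 2\sqrt{ab} \le a+b - 2\min(a,b) = |a-b|$ for nonnegative $a,b$.)

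Summing the pointwise bound over $k$ and inserting into the Hellinger identity gives $1 - F(p,q) \le \tfrac{1}{2}\sum_k |p_k - q_k| = |p-q|/2$, which rearranges to the stated inequality $F(p,q) \ge 1 - |p-q|/2$. There is no real obstacle in this argument; the only substantive step is the elementary inequality $(\sqrt{a}-\sqrt{b})^2 \le |a-b|$, and the rest is bookkeeping using that $p$ and $q$ are probability distributions.
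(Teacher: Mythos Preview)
Your proof is correct and follows exactly the paper's argument: the same Hellinger identity $1-F(p,q)=\tfrac{1}{2}\sum(\sqrt{p_k}-\sqrt{q_k})^2$ followed by the same pointwise bound $(\sqrt{p_k}-\sqrt{q_k})^2\le|p_k-q_k|$. You spell out the justification for the pointwise inequality more explicitly than the paper does, but the approach is identical.
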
\par\bigskip

Now, we have \begin{equation}
\label{B3def}
B_3(M) = \underset{\{q_i\}_{i=1}^m}{\max} \dfrac{1}{\sum_{i,j}q_iq_jF(M_i, M_j)^2} =  \dfrac{1}{\underset{\{q_i\}_{i=1}^m}{\min}\sum_{i,j}q_iq_jF(M_i, M_j)^2}.
\end{equation}Then we need to prove the lower bound on $\underset{q\in\Delta_m}{\min}\sum_{i,j}q_iq_jF(M_i, M_j)^2$. 
%
%


We will find the lower bound on this quadratic form for an arbitrary distribution $q$. Without loss of generality, assume $q_1 \geq q_2 \geq \dots \geq q_n$. 
\begin{lem}
There exists $s \in \overline{1,m}$ such that $sq_s \geq \frac1{\ln m + 1}$.
\begin{proof}
Suppose the opposite: $sq_s \leq \frac1{\ln m + 1} \ \forall s \in  \overline{1,m}$. Then 
\[1 = q_1 + q_2 + \cdots + q_m \leq   \frac1{\ln m + 1} + \frac1{2\left(\ln m + 1\right)} + \cdots + \frac1{m\left(\ln m + 1\right)} =\]
\[= \frac1{\ln m + 1}\left(1 + \dfrac1{2} + \dfrac1{3} + \cdots + \dfrac1{m}\right) <  \frac1{\ln m + 1}\left(1 + \int_{1}^m \dfrac1{x}dx\right)    = 1.        \]
\end{proof}
\end{lem}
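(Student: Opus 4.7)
The plan is to proceed by contradiction, exploiting the fact that $q$ is a probability distribution together with the monotonicity $q_1 \geq q_2 \geq \dots \geq q_m$. Suppose, toward contradiction, that $sq_s < \frac{1}{\ln m + 1}$ for every $s \in \overline{1,m}$. Then each $q_s$ is bounded pointwise by $q_s < \frac{1}{s(\ln m + 1)}$, and summing these upper bounds should produce a total strictly less than $1$, which is impossible because $\sum_{s=1}^m q_s = 1$.

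To carry this out concretely, I would write
\[
1 \;=\; \sum_{s=1}^m q_s \;<\; \frac{1}{\ln m + 1} \sum_{s=1}^m \frac{1}{s} \;=\; \frac{H_m}{\ln m + 1},
\]
where $H_m$ is the $m$-th harmonic number. The proof then reduces to showing $H_m \leq 1 + \ln m$. The standard way to get this is to compare the sum $\sum_{s=2}^m \frac{1}{s}$ to the integral $\int_1^m \frac{dx}{x} = \ln m$: since $1/x$ is decreasing, $\frac{1}{s} \leq \int_{s-1}^{s} \frac{dx}{x}$ for $s \geq 2$, and hence $H_m = 1 + \sum_{s=2}^m \frac{1}{s} \leq 1 + \ln m$. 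Substituting this back gives $1 < \frac{1 + \ln m}{1 + \ln m} = 1$, the desired contradiction.

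One small subtlety worth flagging: the hypothesis in the contradiction step should really be the strict inequality $sq_s < \frac{1}{\ln m + 1}$ (the logical negation of the lemma's strict-or-equal conclusion $sq_s \geq \frac{1}{\ln m + 1}$), and the final chain of inequalities must yield a strict $<$ somewhere to produce $1 < 1$. The strictness comes for free from the integral comparison (at least whenever $m \geq 2$; the case $m=1$ is trivial since $q_1 = 1$ already).

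I do not expect any real obstacle here — the argument is essentially a one-line averaging against the harmonic distribution $\{1/(sH_m)\}_s$, which is exactly the extremal sorted distribution that would make $sq_s$ uniformly small. The only thing to be careful with is the harmonic-integral estimate and the strictness bookkeeping.
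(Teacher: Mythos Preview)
Your argument is correct and is essentially identical to the paper's: assume the negation, bound $\sum q_s$ by $H_m/(1+\ln m)$, and use the integral estimate $H_m \le 1 + \ln m$ to derive $1<1$. You are in fact slightly more careful than the paper about strict versus non-strict inequalities (the paper negates to $sq_s \le \tfrac{1}{\ln m+1}$ and recovers strictness from the integral comparison, while you correctly negate to $sq_s < \tfrac{1}{\ln m+1}$), but the substance is the same.
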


Then we have \begin{align} \sum\limits_{i,j=1}^{m}q_iq_jF(M_i, M_j)^2 \geq \sum\limits_{i, j =1}^{s}q_iq_jF(M_i, M_j)^2 \geq \sum\limits_{i, j =1}^{s}q_s^2F(M_i, M_j)^2 = \notag\\ = s^2q_s^2 \cdot\dfrac{\sum\limits_{i, j =1}^{s}F(M_i, M_j)^2} {s^2} \geq \dfrac1{(\ln m +1)^2}\cdot \dfrac{\sum\limits_{i, j =1}^{s}F(M_i, M_j)^2} {s^2} \label{first}\end{align}
Now, using the RMS-AM inequality and Lemma \ref{fp}, we get:

\begin{equation}   \dfrac{\sum\limits_{i, j =1}^{s}F(M_i, M_j)^2} {s^2} \geq \left( \dfrac{\sum\limits_{i, j =1}^{s}F(M_i, M_j)} {s^2} \right)^2        \geq \left(  \dfrac{\sum\limits_{i,j=1}^{s}\left(1-\dfrac{|M_i-M_j|}{2}\right)}{s^2}  \right)^2  = \left( 1 - \dfrac{\dfrac1{2}\sum\limits_{i,j=1}^{s}|M_i - M_j|}{s^2} \right)^2  \label{ko}\end{equation}

For any stochastic matrix $M \in \mathbb{R}^{n\times m}$ denote $S(M) = \dfrac{\dfrac1{2}\sum\limits_{i,j=1}^{m}|M_i - M_j|}{m^2}$ -- the arithmetic mean of statistical distances between $m$ columns of $M$. It now suffices to show the upper bound on $S(M)$.

\begin{lem}
\label{premain}
Let $M \in \mathbb{R}^{n\times m}$ be a stochastic matrix with $\mathrm{rank}(M) = r$. Then there exists a stochastic matrix $\tilde{M} \in \mathbb{R}^{r\times m}$ such that $S(M) \leq S(\tilde{M})$.
\begin{proof}

 We apply the row elimination algorithm. Suppose $M$ has more then $r$ non-zero rows. Consider then the $\e$-transformation $M\ep$ of the original matrix. Since the $\e$-transformation does not change the sums of entries in every column of the matrix, $M\ep$
is also stochastic. We now explore how $S(M)$ changes after the $\e$-transformation:

\begin{align*}
\notag
S(M\ep) - S(M) &= \dfrac{1}{2m^2}\left( \sum\limits_{i,j=1}^{m}\left(|M\ep_i - M\ep_j| - |M_i - M_j|\right)      \right)   = \\ 
 &=   \dfrac{1}{2m^2}\left(\sum\limits_{k=1}^n\left[ \sum\limits_{i,j=1}^{m}\left(|m\ep_{ki} - m\ep_{kj}| - |m_{ki} - m_{kj}|\right)    \right]  \right)   =           \\
  &= \dfrac{1}{2m^2}\left(\sum\limits_{k=1}^{r+1}\left[ \sum\limits_{i,j=1}^{m}\left(|m_{ki} - m_{kj}|(1+\varepsilon\alpha_k) - |m_{ki} - m_{kj}|\right)    \right]  \right)      =       \\
 &=  \dfrac{1}{2m^2}\left(\sum\limits_{k=1}^{r+1}\left[ \sum\limits_{i,j=1}^{m}|m_{ki} - m_{kj}|\varepsilon\alpha_k   \right]  \right)    = \varepsilon   \cdot \Lambda.          
\end{align*}

So, the difference $S(M\ep) - S(M)$ is linear in terms of $\varepsilon$. Remind again that the $\e$-transformation is valid for $\e \in \Da$, where the left end of $\Da$ is negative, and the right end is positive. It means that we can choose an end of the interval of $\Da$ such that 
$S(M\ep) \geq S(M)$ and with the chosen value of $\e$ at least one of the first $(r+1)$ rows in $M\ep$ becomes zero.
When we apply such $\e$-transformations with suitable $\e$'s, 
the number of non-zero rows strictly decreases, and $S(M)$ does not decrease. At the end of such procedure we will obtain the matrix $\tilde{M}$ with at most $r$ non-zero rows for which $S(M) \leq S(\tilde{M})$.
\end{proof}

\end{lem}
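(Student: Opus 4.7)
The plan is to adapt the row-elimination ($\e$-transformation) machinery developed in Section~\ref{elimination} and apply it iteratively so as to reduce the number of non-zero rows of $M$ while never decreasing the value of $S$.

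If $M$ already has at most $r$ non-zero rows, we are done (pad with zero rows if necessary). Otherwise, since $\rank M = r$, any $r+1$ non-zero rows are linearly dependent, so the $\e$-transformation $M\ep$ is well-defined and, as established in Section~\ref{elimination}, preserves stochasticity for $\e \in \Da$. The crucial step is to compute $S(M\ep) - S(M)$ and observe that this difference is a linear function of $\e$. Linearity follows because for each of the first $r+1$ rows, every entry is rescaled by the same nonnegative factor $(1 + \e\alpha_k)$, so $|m\ep_{ki} - m\ep_{kj}| = (1 + \e\alpha_k)\,|m_{ki} - m_{kj}|$, while the remaining rows are left untouched. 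Summing these absolute values yields a quantity affine in $\e$, with the constant term equal to $S(M)$.

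Once linearity is in hand, I would exploit the structure of $\Da$. By construction $\Da$ is a closed interval containing $0$ in its interior, with a strictly negative left endpoint and a strictly positive right endpoint. Since $\e \mapsto S(M\ep) - S(M)$ is linear and vanishes at $\e = 0$, at least one of the two endpoints of $\Da$ must yield $S(M\ep) \geq S(M)$. At either endpoint, at least one coefficient $(1 + \e\alpha_k)$ becomes zero, so the resulting matrix has strictly fewer non-zero rows than $M$. Iterating, the row-count strictly decreases at each step while $S$ is monotone non-decreasing; after finitely many iterations the matrix has at most $r$ non-zero rows, and after rearrangement we obtain the desired stochastic $\tilde{M} \in \RR^{r \times m}$ with $S(M) \leq S(\tilde{M})$.

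The one genuine subtlety is verifying the linearity claim: the identity $|a(1+\e\alpha_k) - b(1+\e\alpha_k)| = (1 + \e\alpha_k)\,|a - b|$ is what makes the whole argument work, and it relies on the non-negativity $(1 + \e\alpha_k) \geq 0$ — which is precisely the defining condition of $\Da$. Modulo this, the argument is structurally parallel to the proof of Theorem~\ref{mutual_thm}, where essentially the same $\e$-transformation trick was used to reduce the number of non-zero rows without decreasing the relevant functional, and I do not anticipate any deeper obstacle.
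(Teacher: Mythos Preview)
Your proposal is correct and follows essentially the same approach as the paper: apply the $\e$-transformation, use the identity $|m\ep_{ki}-m\ep_{kj}|=(1+\e\alpha_k)\,|m_{ki}-m_{kj}|$ (valid on $\Da$) to see that $S(M\ep)-S(M)$ is linear in $\e$, pick the endpoint of $\Da$ that makes this difference nonnegative, and iterate. If anything, your explicit remark that the linearity hinges on $(1+\e\alpha_k)\ge 0$ is slightly more careful than the paper's own write-up.
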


\begin{lem}
\label{main}
Let $M \in \mathbb{R}^{r\times m}$ be a stochastic matrix. Then
%
%

\[S(M)  \leq 1 - \dfrac1{r}.\]
\begin{proof}
If $m \leq r$, then $\dfrac{\dfrac1{2}\sum\limits_{i,j=1}^{m}|M_i - M_j|}{m^2} \leq \dfrac{m^2-m}{m^2} = 1 - \dfrac1{m} \leq 1 - \dfrac1{r}$, where we just used $|M_i - M_j| \leq 2$.\\

\vspace{2pt}
\noindent Now suppose $ m > r $. Denote $Z(M) = \dfrac1{2}\sum\limits_{i,j=1}^{m}|M_i - M_j| = \dfrac1{2}\sum\limits_{k=1}^r\sum\limits_{i,j=1}^{m}|m_{ki} - m_{kj}|$.\\

\vspace{1pt}
\noindent We now construct the matrix $B$ by sorting every row of $M$. Obviously, $Z(M) = Z(B)$, since it is just a permutation of terms. Then
\[ Z(M) =   Z(B) =     \dfrac1{2}\sum\limits_{k=1}^r\sum\limits_{i,j=1}^{m}|b_{ki} - b_{kj}|     =  \sum\limits_{k=1}^r\sum\limits_{i=1}^m\sum\limits_{j=i}^m(b_{ki} - b_{kj}).        \]
For each $b_{ki}$ in this sum it occurs $(m-i)$ times with the sign $(+1)$ and $(i-1)$ times with the sign $(-1)$. Hence, 
%
%

\begin{align}
  Z(M) = \sum\limits_{k=1}^r \left( (m-1)b_{k1} + (m-3)b_{k2} + \cdots - (m-3)b_{k(m-1)} - (m-1)b_{km}                     \right)  = \nonumber \\
 = (m-1)\sum_{k=1}^rb_{k1} +    (m-3)\sum_{k=1}^rb_{k2} + \cdots - (m-3)\sum_{k=1}^rb_{k(m-1)} - (m-1)\sum_{k=1}^rb_{km}      
\label{wow}
\end{align}

Clearly, $Z(M)$ takes its maximal value when the sum in the first columns of $B$ is maximal. 
Since $b_{ki} \leq 1$ and the sums of all the entries in $B$ and $M$ coincide and are equal to $m$, to maximize $Z(M)$ we need to have $m$ ones in total in the first columns of $B$. Denote $m = sr + p,\ p < r$. If $r=1$, then the matrix $M$ consists of ones only (since it is stochastic), then $S(M) = 0$ and the inequality in the lemma is obvious. If $r > 1$, then it is easy to show that $(s+1) \leq \lceil\frac{m}{2}\rceil$. Note that exactly first $\lceil\frac{m}{2}\rceil$ summands are nonnegative in \eqref{wow}, so to maximize $Z(M)$ first $(s+1)$ columns of $B$ should be filled with ones:
%
%

\begin{figure}[H]
  \begin{minipage}{.5\textwidth}
    \begin{equation*}
\qquad B^* = 
\begin{pmatrix} 
  1   & 1 & \cdots &1 &1    & 0 & 0 &\cdots & 0\\ 
  \vdots & \vdots & \cdots &\vdots &\vdots & \vdots & \vdots &\cdots & 0\\

  \vdots & \vdots & \cdots &1  &1& 0 & 0 &\cdots & 0\\
  \vdots & \vdots & \cdots &1  &0& 0 & 0 &\cdots & 0\\
  \vdots & \vdots & \cdots &\vdots &\vdots & \vdots & \vdots &\cdots & 0\\

 1   & 1 & \cdots &1 &0 & 0 & 0 &\cdots & 0\\ 

\end{pmatrix}
\end{equation*}
  \end{minipage}%
  \begin{minipage}{.5\textwidth}
    \centering
    \includegraphics[width=6cm]{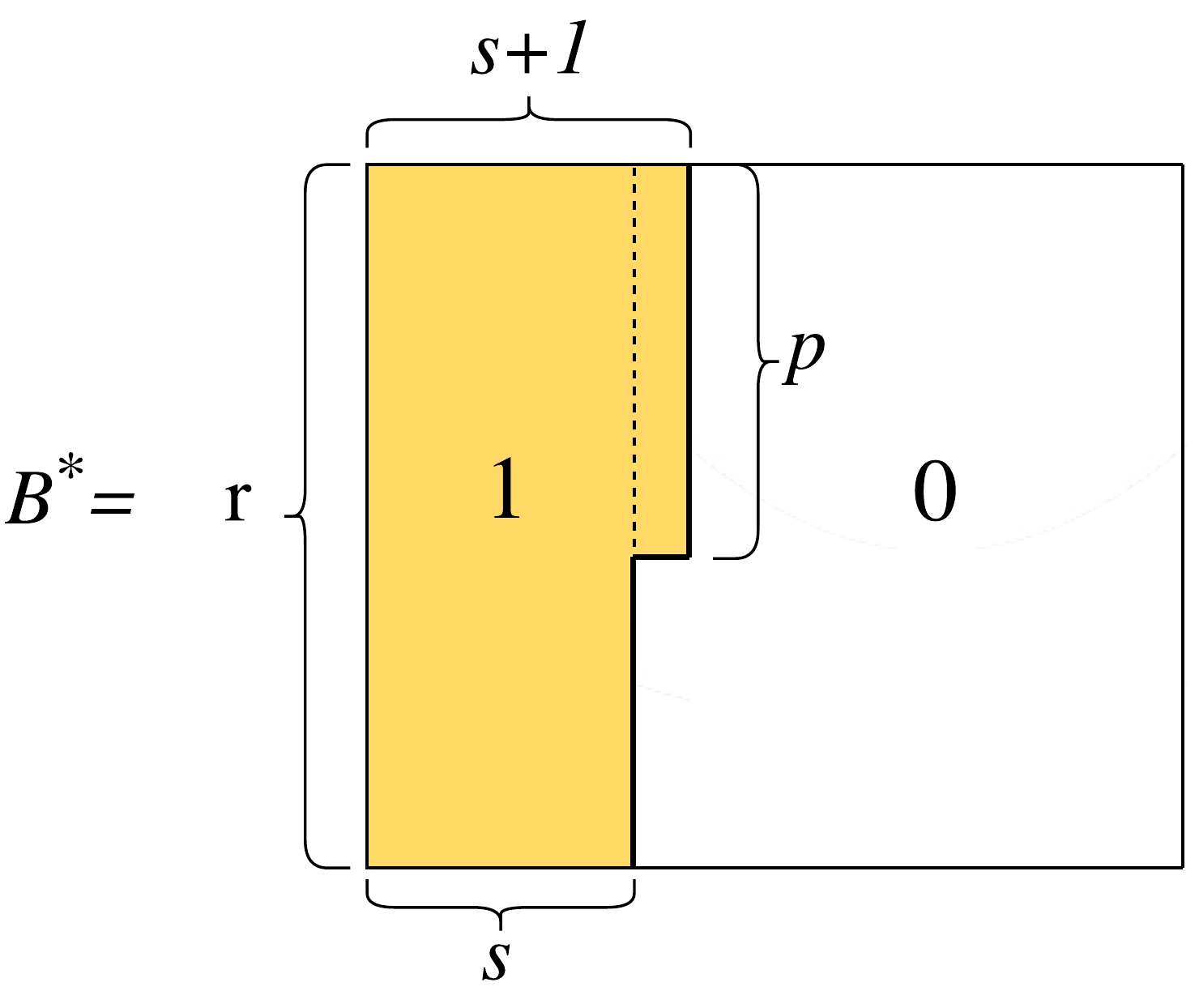}
  \end{minipage}
\end{figure}





Such matrix $B^*$ would correspond to the following matrix $M^*$:
\begin{equation*}
M^* = 
\begin{pmatrix} 
1 &   &        &        &   & 1 &   &        &        &   &        & 1 &        &   \\
  & 1 &        &        &   &   & 1 &        &        &   & \cdots &   & \ddots &   \\
  &   & \ddots &        &   &   &   & \ddots &        &   & \cdots &   &        & 1 \\
  &   &        & \ddots &   &   &   &        & \ddots &   & \cdots &   &        & 0 \\
  &   &        &        & 1 &   &   &        &        & 1 &        &   &        & 0

\end{pmatrix}.
\end{equation*}
\[  Z(M^*) = r\left( (m-1) + (m-3) + \cdots + (m+1 - 2s)\right) + p(m-1-2s) = r(m-s)s + pm - p - 2sp =\] 
\[ = m(sr + p) - rs^2 - p - 2sp = m^2 - \dfrac{(rs)^2 + 2rsp + pr}{r}   \leq m^2 - \dfrac{(rs)^2 + 2rsp + p^2}{r}    = m^2\left(1 - \dfrac1{r}\right) .        \]
Then 
\[ S(M) \leq S(M^*) = \dfrac{Z(M^*)}{m^2} \leq \left(1 - \dfrac1{r}\right). \]

\end{proof}

\end{lem}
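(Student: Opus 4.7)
The plan is to prove the bound $S(M)\le 1-1/r$ by first handling the easy regime $m\le r$, and then in the interesting regime $m>r$ reducing the estimation of $S(M)$ to a linear optimization over $0/1$-valued, sorted-row matrices whose total mass equals $m$.

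First I would dispose of the case $m\le r$ by using only that $|M_i-M_j|\le 2$ for any two columns of a stochastic matrix, together with the fact that the diagonal terms of the sum vanish. This gives $S(M)\le (m^2-m)/m^2=1-1/m\le 1-1/r$. So from now on assume $m>r$.

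Let $Z(M)=\frac12\sum_{i,j=1}^m|M_i-M_j|$, so that $S(M)=Z(M)/m^2$. The key observation is that $Z(M)=\sum_{k=1}^r\frac12\sum_{i,j=1}^m|m_{ki}-m_{kj}|$, and each inner sum depends only on the multiset of entries in row $k$, not on their positions. So I would sort each row of $M$ in decreasing order to obtain a new matrix $B\in[0,1]^{r\times m}$ with $Z(B)=Z(M)$; note that entries remain in $[0,1]$ (since each original column summed to $1$ with nonnegative entries), and the total mass $\sum_{k,i}b_{ki}$ stays equal to $m$ because rearrangement preserves row sums and hence the grand sum. Since each row is now sorted, $\frac12\sum_{i,j}|b_{ki}-b_{kj}|=\sum_{i<j}(b_{ki}-b_{kj})$, and a counting-occurrences argument (position $i$ appears $(m-i)$ times with coefficient $+1$ and $(i-1)$ times with coefficient $-1$) gives the clean representation
\[ Z(B)=\sum_{k=1}^r\sum_{i=1}^m c_i\,b_{ki},\qquad c_i=m-2i+1. \]

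Now I would upper-bound this linear functional over the feasible set $\{B\in[0,1]^{r\times m}:\sum_{k,i}b_{ki}=m,\ \text{rows decreasing}\}$. Because the coefficients $c_i$ are strictly decreasing in $i$ and identical across rows, the optimum is obtained by a greedy/bang-bang allocation: assign the value $1$ to the $m$ positions with largest coefficients. Writing $m=rs+p$ with $0\le p<r$, these positions are the first $s$ columns in their entirety (giving $rs$ ones, automatically sorted) together with any $p$ entries in column $s+1$; the resulting matrix $B^*$ indeed has decreasing rows, and
\[ Z(M)\le Z(B^*)=r\sum_{i=1}^{s}(m-2i+1)+p(m-2s-1)=rs(m-s)+p(m-2s-1). \]

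The last step is the algebraic verification that $Z(B^*)\le m^2-m^2/r$. Substituting $m=rs+p$ on both sides reduces the inequality to $p(r-p)/r\ge 0$, which holds since $0\le p<r$. Dividing by $m^2$ yields $S(M)\le 1-1/r$. I expect the main nuisance to be the bookkeeping in this final algebraic step and the justification that the sorted/greedy configuration really maximizes the linear functional — everything else is essentially a rearrangement and a counting argument. The tiny edge case $r=1$ is immediate since then every column of $M$ equals $1$ and $S(M)=0\le 0$.
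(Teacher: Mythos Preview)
Your proposal is correct and follows essentially the same route as the paper: both dispose of $m\le r$ trivially, then sort rows to write $Z(M)=\sum_{k,i}(m-2i+1)b_{ki}$, maximize this linear form by a greedy $0/1$ allocation with mass $m=rs+p$, and finish with the same algebraic identity (your reduction to $p(r-p)/r\ge0$ is exactly the paper's step $pr\ge p^2$). Your relaxation argument---observing that the greedy optimum over $[0,1]^{r\times m}$ with total mass $m$ automatically has decreasing rows---slightly streamlines the paper's presentation, which instead checks $s+1\le\lceil m/2\rceil$ to ensure the filled columns carry nonnegative coefficients.
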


\begin{proof}[Proof of Theorem \ref{TB3}]
The first $s$ columns of $M$ form the matrix $M' \in \mathbb{R}^{n\times s}$ with  $\mathrm{rank}(M') = r' \leq r$. Using Lemma \ref{premain}, we conclude that there exists $\tilde{M'} \in \mathbb{R}^{r'\times s}$ such that $S(M')\leq S(\tilde{M'})$. Applying Lemma \ref{main} we get $S(M')\leq S(\tilde{M'}) \leq \left(1 - \dfrac1{r'}\right) \leq \left(1 - \dfrac1{r}\right) $. Then from \eqref{ko}:
 
\begin{equation*}
\dfrac{\sum\limits_{i, j =1}^{s,s}F(M_i,M_j)^2} {s^2} \geq \dfrac1{r^2}.
\end{equation*}
Then from \eqref{first} for every distribution $q$ we obtain: 
\begin{equation}
\label{lower_quad} \sum\limits_{i,j}^{m}q_iq_jF(M_i,M_j)^2 \geq \dfrac{1}{(\ln m +1)^2r^2}.
\end{equation}
And finally, using \eqref{B3def},
\[ B_3(P) \leq (\ln m +1)^2r^2 .\]

\end{proof}

\subsection{Upper bound on $B_4$}

\begin{thm}\label{B4-bnd}
Let $M \in \RR^{n\times m}$ be a stochastic matrix, $\rank M = r$. Then

\begin{equation}
\label{B4bound}  B_4(M) \leq r. 
\end{equation}

\end{thm}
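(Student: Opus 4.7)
The strategy mirrors the proofs of Theorem \ref{mutual_thm} and Theorem \ref{TB3}: apply the row elimination transformation from Section \ref{elimination} repeatedly to reduce to a stochastic matrix with at most $r$ non-zero rows, and then bound $B_4$ on that reduced matrix in an elementary way.

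First I would track how $B_4$ behaves under an $\e$-transformation. For $\e \in \Da$ the coefficients $(1+\e\alpha_i)$ are nonnegative, so for $i \leq r+1$,
\[
\max_j M\ep(i,j) = (1+\e\alpha_i)\max_j M(i,j),
\]
while for $i > r+1$ the row is untouched. Summing over $i$,
\[
B_4(M\ep) - B_4(M) = \e \sum_{i=1}^{r+1} \alpha_i \max_j M(i,j) = \e \cdot \Lambda,
\]
which is linear in $\e$. Exactly as in the earlier arguments, since $\Da$ has a negative left endpoint and a positive right endpoint, one of the two endpoints gives $B_4(M\ep) \geq B_4(M)$, and at that endpoint at least one of the first $r+1$ rows becomes identically zero. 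The transformation also preserves column sums, so $M\ep$ remains stochastic.

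Iterating this procedure (always using the current rank, which cannot increase) produces a stochastic matrix $\tilde M$ with at most $r$ non-zero rows and $B_4(M) \leq B_4(\tilde M)$. Since $\tilde M$ is stochastic, every entry is at most $1$, so $\max_j \tilde M(i,j) \leq 1$ for each row. Summing over the at most $r$ non-zero rows gives $B_4(\tilde M) \leq r$, and the theorem follows.

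There is essentially no obstacle here: the only thing to check carefully is that $\max_j$ commutes with multiplication by the nonnegative scalar $(1+\e\alpha_i)$, which is immediate. Once that is observed, the row-elimination machinery already set up in Section \ref{elimination} does all the work, and the bound $B_4(\tilde M) \leq r$ is just the trivial estimate that a stochastic matrix with $r$ non-zero rows has entries bounded by $1$ in each of those rows.
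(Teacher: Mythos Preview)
Your proof is correct and essentially identical to the paper's: both apply the row elimination transformation, observe that $B_4(M\ep)-B_4(M)$ is linear in $\e$ because $\max_j$ commutes with the nonnegative scalar $(1+\e\alpha_i)$, pick the endpoint of $\Da$ that does not decrease $B_4$, iterate down to at most $r$ non-zero rows, and finish with the trivial bound $\max_j \tilde M(i,j)\le 1$ on each surviving row.
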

\begin{proof}
Again, we apply the row elimination transformation. Note that since every row in the matrix $M$ after this transformation is either multiplied by some nonnegative factor $\alpha$ or remains unchanged, the maximal element in this row is, obviously, multiplied by the same factor $\alpha$ or remains constant as well.

Suppose $M$ has at least $r+1$ non-zero rows, and without loss of generality, suppose that these are the first $r+1$ rows of M. Now consider the $\e$-transformation $M\ep$ of $M$, and explore how the functional $B_4$ changes after such transformation, taking the last remark into consideration:

\[ B_4(M\ep) - B_4(M) = \sum_{i=1}^n \left( \max_j M\ep(i,j) - \max_j M(i,j) \right)  = \sum_{i=1}^n \left( (1 + \alpha_i\e)\max_j M(i,j) - \max_j M(i,j) \right)  = \]
\[   =    \sum_{i=1}^n \left(\alpha_i\e\max_j M(i,j)\right) = \e\cdot\Lambda.   \]

Similarly to previous proofs, $B_4$ is linear in terms of $\e$, and therefore when $\e$ equals one of the ends of $\Delta_{\alpha}$, the difference between $B_4(M\ep)$ and $B_4(M)$ is nonnegative, while $M\ep$ has strictly less non-zero rows, then $M$. Again, applying such transformations with suitable $\e$'s, at the end we obtain the matrix $\tilde{M}$ with at most $r$ non-zero rows, for which $B_4(M) \leq B_4(\tilde{M})$. It only remains to note that in the formula for $B_4(\tilde{M})$ there are at most $r$ non-zero summands, each less or equal than $1$ (since $\tilde{M}$ is also stochastic). Therefore, we have $B_4(M) \leq B_4(\tilde{M}) \leq r$.

\end{proof}

\subsection{Upper bound on $B_5$}
\begin{thm}
Let $M \in \RR^{n\times m}$ be a stochastic matrix, $\rank M = r$. Then

\[  B_5(M) \leq r^2(\ln m +1).  \]
\end{thm}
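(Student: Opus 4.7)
The plan is to combine the bounds already established for $B_3$ and $B_4$. Observe that each summand in $B_5(M)$ is a ratio whose numerator is at most $\max_j M(i,j)$ and whose denominator is the square root of a quadratic form of the fidelities that was bounded below in the proof of Theorem \ref{TB3}.

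First, for any probability distribution $q^{(i)} = \{q_k^{(i)}\}_{k=1}^m$, the numerator satisfies
\[
\sum_{k=1}^m q_k^{(i)} M(i,k) \;\leq\; \max_{k} M(i,k),
\]
since it is a convex combination of the entries of row $i$. Second, inequality \eqref{lower_quad} from the proof of Theorem \ref{TB3} gives, uniformly in the distribution $q^{(i)}$,
\[
\sum_{s,t=1}^m q_s^{(i)} q_t^{(i)} F(M_s, M_t)^2 \;\geq\; \frac{1}{(\ln m + 1)^2 r^2},
\]
so that the denominator of each summand is at least $\frac{1}{(\ln m + 1) r}$. Combining these two observations pointwise in $q^{(i)}$ and then taking the maximum over $q^{(i)}$ yields
\[
\max_{q^{(i)}} \frac{\sum_{k=1}^m q_k^{(i)} M(i,k)}{\sqrt{\sum_{s,t=1}^m q_s^{(i)} q_t^{(i)} F(M_s, M_t)^2}} \;\leq\; (\ln m + 1)\, r \cdot \max_{k} M(i,k).
\]

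Summing over $i$ and recognizing $\sum_{i=1}^n \max_k M(i,k) = B_4(M)$, we obtain
\[
B_5(M) \;\leq\; (\ln m + 1)\, r \cdot B_4(M).
\]
Finally, applying Theorem \ref{B4-bnd} to bound $B_4(M) \leq r$ gives the desired $B_5(M) \leq r^2 (\ln m + 1)$.

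No new obstacle is expected here: the inequality \eqref{lower_quad} was the heavy technical step, established by the row elimination machinery in Section \ref{sectB3}, and $B_4 \leq r$ is already in hand. The only thing to verify carefully is that the lower bound on the quadratic form in the denominator holds for every distribution $q^{(i)}$, which is exactly what \eqref{lower_quad} asserts. Thus the proof is essentially a clean two-line combination of the two previous theorems.
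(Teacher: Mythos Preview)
Your proof is correct and follows essentially the same route as the paper: apply the uniform lower bound \eqref{lower_quad} on the fidelity quadratic form to handle the denominator, reduce the remaining expression to $\sum_i \max_k M(i,k)=B_4(M)$, and finish with Theorem~\ref{B4-bnd}. The only cosmetic difference is that the paper bounds the denominator first and then maximizes the linear numerator, whereas you bound the numerator first; the resulting inequality is identical.
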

\begin{proof}
Simply applying \eqref{lower_quad} and \eqref{B4bound} 
, we get:
\begin{equation*}
B_5(M) \leq \sum_{i=1}^n \max_{\{q^{(i)}_j\}_{j=1}^m} \left( (\ln m + 1)r \sum_{k=1}^mq_k^{(i)}M(i, k) \right) = (\ln m + 1)r\sum_{i=1}^n \max_k M(i,k) \leq (\ln m +1)r^2.
\end{equation*}
The last inequality is due to Theorem~\ref{B4-bnd}.
\end{proof}

\bibliographystyle{alpha}
\bibliography{psd_bounds}

\newcommand{\etalchar}[1]{$^{#1}$}
\begin{thebibliography}{LWdW16}

\bibitem[FGP{\etalchar{+}}15]{Fawzi_2015}
Hamza Fawzi, Jo{\~{a}}o Gouveia, Pablo~A. Parrilo, Richard~Z. Robinson, and
  Rekha~R. Thomas.
\newblock Positive semidefinite rank.
\newblock {\em Mathematical Programming}, 153(1):133--177, Jul 2015.

\bibitem[FMP{\etalchar{+}}12]{Fiorini_2012}
Samuel Fiorini, Serge Massar, Sebastian Pokutta, Hans~Raj Tiwary, and Ronald
  de~Wolf.
\newblock Linear vs. semidefinite extended formulations.
\newblock In {\em Proceedings of the 44th symposium on Theory of Computing -
  {STOC}'12}. Association for Computing Machinery ({ACM}), 2012.

\bibitem[GGK{\etalchar{+}}13]{Gouveia_2013_2}
Jo{\~{a}}o Gouveia, Roland Grappe, Volker Kaibel, Kanstantsin Pashkovich,
  Richard~Z. Robinson, and Rekha~R. Thomas.
\newblock Which nonnegative matrices are slack matrices?
\newblock {\em Linear Algebra and its Applications}, 439(10):2921--2933, nov
  2013.

\bibitem[GPT13]{Gouveia_2013}
Jo{\~{a}}o Gouveia, Pablo~A. Parrilo, and Rekha~R. Thomas.
\newblock Lifts of convex sets and cone factorizations.
\newblock {\em Mathematics of Operations Research}, 38(2):248--264, May 2013.

\bibitem[JSWZ13]{Jain_2013}
Rahul Jain, Yaoyun Shi, Zhaohui Wei, and Shengyu Zhang.
\newblock Efficient protocols for generating bipartite classical distributions
  and quantum states.
\newblock In {\em Proceedings of the Twenty-Fourth Annual {ACM}-{SIAM}
  Symposium on Discrete Algorithms}, pages 1503--1512. Society for Industrial
  {\&} Applied Mathematics ({SIAM}), Jan 2013.

\bibitem[LRS15]{Lee_2015}
James~R. Lee, Prasad Raghavendra, and David Steurer.
\newblock Lower bounds on the size of semidefinite programming relaxations.
\newblock In {\em Proceedings of the Forty-Seventh Annual {ACM} on Symposium on
  Theory of Computing - {STOC}'15}. Association for Computing Machinery
  ({ACM}), 2015.

\bibitem[LWdW16]{Lee_2016}
Troy Lee, Zhaohui Wei, and Ronald de~Wolf.
\newblock Some upper and lower bounds on {PSD}-rank.
\newblock {\em Mathematical Programming}, 162(1-2):495--521, Jul 2016.

\bibitem[Raz90]{Razborov}
A.~A. Razborov.
\newblock On the distributional complexity of disjointness.
\newblock In {\em Automata, Languages and Programming}, pages 249--253.
  Springer Nature, 1990.

\bibitem[Rot14]{Rothvoss_2014}
Thomas Rothvoss.
\newblock The matching polytope has exponential extension complexity.
\newblock In {\em Proceedings of the 46th Annual {ACM} Symposium on Theory of
  Computing - {STOC}'14}. Association for Computing Machinery ({ACM}), 2014.

\bibitem[Shi16]{Shitov_2016}
Yaroslav Shitov.
\newblock The complexity of positive semidefinite matrix factorization.
\newblock 11(1), Jun 2016.

\bibitem[Wat11]{Watrous_11}
John Watrous.
\newblock Lecture notes in theory of quantum information, 2011.

\bibitem[Yan91]{Yannakakis_1991}
Mihalis Yannakakis.
\newblock Expressing combinatorial optimization problems by linear programs.
\newblock {\em Journal of Computer and System Sciences}, 43(3):441--466, Dec
  1991.

\end{thebibliography}
\nocite{*}

\end{document}